\documentclass[11pt]{article}

\usepackage{fullpage}
\usepackage{natbib}
\usepackage{ifthen}
\usepackage{xcolor}
\usepackage{amsmath,amssymb,amsthm}

\newtheorem{DefinitionNN}{Definition}
\newtheorem{LemmaNN}{Lemma}
\newtheorem{TheoremNN}{Theorem}

  \newcommand{\market}{M}
  \newcommand{\sample}{S}
  
  \newcommand{\val}{v}
  \newcommand{\vals}{{\mathbf \val}}
  \newcommand{\valsmi}[1][i]{{\mathbf \val}_{-#1}}
  \newcommand{\vali}[1][i]{{\val_{#1}}}
  
  \newcommand{\mals}{{\mathbf \val}^{\market}}
  
  
  \newcommand{\sals}{{\mathbf \val}^{\sample}}
  
  \newcommand{\eval}{\tilde{v}}
  \newcommand{\evals}{{\mathbf \eval}}
  \newcommand{\evali}[1][i]{{\eval_{#1}}}


  \DeclareMathOperator{\BSPEoperator}{BSPE}
  \newcommand{\BSPE}[1]{\BSPEoperator_{#1}}
  \DeclareMathOperator{\ICoperator}{IC}
  \newcommand{\IC}[1]{\ICoperator\ifthenelse{\not\equal{}{{#1}}}{^{{#1}}}{}}
  \DeclareMathOperator{\EFO}{EFO}
  \newcommand{\super}[1]{^{(#1)}}
  
  \newcommand{\prop}{p} 
  \newcommand{\pruin}{r} 

%
%
\newcommand{\prob}[2][]{\text{\bf Pr}\ifthenelse{\not\equal{}{#1}}{_{#1}}{}\!\left[#2\right]}
\newcommand{\expect}[2][]{\text{\bf E}\ifthenelse{\not\equal{}{#1}}{_{#1}}{}\!\left[#2\right]}
\newcommand{\given}{\,\middle|\,}

\begin{document}

\title{\Large The Biased Sampling Profit Extraction Auction}
\author{
  Bach Q. Ha\thanks{Department of Electrical Engineering and
  Computer Science, Northwestern University, Evanston, IL.  Email:
  {\texttt \{bach,hartline\}@u.northwestern.edu}.}\\
\and
  Jason D. Hartline\footnotemark[1]
}
\date{}
\maketitle

\begin{abstract}
 We give an auction for downward-closed environments that generalizes
 the random sampling profit extraction auction for digital goods of
 \citet{FGHK02}.  The mechanism divides the agents in to a market and
 a sample using a biased coin and attempts to extract the optimal
 revenue from the sample from the market.  The latter step is done
 with the downward-closed profit extractor of \citet{HH12}.  The
 auction is a $11$-approximation to the envy-free benchmark in
 downward-closed permutation environments.  This is an improvement on
 the previously best known results of $12.5$ for matroid and $30.4$
 for downward-closed permutation environments that are due to
 \citet{DHY12} and \citet{HH12}, respectively.
\end{abstract}

Economic mechanisms that are less dependent on the assumptions of the
environment are more likely to be relevant \citep[cf.][]{W85}.  The
area of {\em prior-free mechanism design} attempts to remove the
distributional assumption on agents while, at the same time,
guaranteeing a good approximation of the optimal revenue.

The performance of a prior-free mechanism is measured with respect to
a benchmark.  Recently, \citet{HY11} proposed the \emph{envy-free
  benchmark}, denoted by $\EFO(\vals)$ where $\vals$ is the valuation
vector of the agents.  This benchmark is the maximum revenue
attainable given that the allocation and payment vectors are
envy-free: no agent prefers another's outcome to her own.  A
downward-closed environment is one where given a feasible set of
agents, all subsets are feasible.  A permutation environment is one
where the agent identities are randomly permuted with respect to the
feasibility constraint.  In downward-closed permutation environments,
\citet{HY11} provide detailed justification for the approximation of
the envy-free benchmark.\footnote{For technical reasons the benchmark
  considered is $\EFO(\vals \super 2)$ where $\vals \super 2 =
  (\vali[2],\vali[2],\vali[3],\ldots,\vali[n])$ is same as $\vals$ but
  with the highest value lowered to the second highest value.}

The main approaches to prior-free auctions for digital goods
generalize to downward-closed permutation environments.  \citet{HY11}
generalized the random sampling auction; \citet{HH12} generalized the
consensus estimate profit extraction auction; and in the present paper
we generalize the random sampling profit extraction auction from
\citet{FGHK02}.  The random sampling profit extraction auction splits
the agents into a market and a sample, estimates the optimal profit
from the sample, and then attempts to extract that profit from the
market.  

\citet{HH12} give a profit extractor for the envy-free benchmark in
downward-closed permutation environments.  This profit extractor is
parameterized by a target valuation profile $\evals$ and on actual
valuation profile $\vals$ is able to extract at least the profit of
the envy-free benchmark $\EFO(\evals)$ when $\vals$ pointwise
dominates $\evals$, i.e., $\vali \geq \evali$ for all $i$, denoted $\vals \geq \evals$ where both $\vals$ and
$\evals$ are sorted in non-increasing order.

\begin{LemmaNN}[\citealp{HH12}]
\label{l:PErev}
 For downward-closed permutation environments there is a profit
 extractor parameterized by $\evals$ that obtains from $\vals$ at
 least the envy-free optimal revenue for $\evals$ if $\vals \geq
 \evals$ and otherwise rejects all agents.
\end{LemmaNN}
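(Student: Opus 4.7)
The plan is to construct the profit extractor from the envy-free optimal outcome on the target profile $\evals$. First, I would solve the envy-free revenue maximization problem on $\evals$, obtaining a set $\tilde{A}$ of sorted ranks to serve together with associated envy-free payments $\{\tilde{p}_i\}_{i\in\tilde{A}}$ satisfying $\tilde{p}_i\leq\evali$ for every $i\in\tilde{A}$ and $\sum_{i\in\tilde{A}}\tilde{p}_i=\EFO(\evals)$; by the definition of $\EFO(\evals)$, the set $\tilde{A}$ corresponds to a feasible allocation in the underlying permutation environment.

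Next, on the actual input $\vals$, I would sort the agents in decreasing order of reported value and, for each $i\in\tilde{A}$, offer the rank-$i$ agent the take-it-or-leave-it price $\tilde{p}_i$. If every offer is accepted---equivalently, if $\vali\geq\tilde{p}_i$ for every $i\in\tilde{A}$---the mechanism allocates to exactly those agents and collects the payments; otherwise it rejects all agents. When $\vals\geq\evals$, the envy-freeness of the target gives $\vali\geq\evali\geq\tilde{p}_i$ for each targeted rank, so extraction succeeds and yields revenue exactly $\EFO(\evals)$. Feasibility of the realized allocation follows from the permutation-environment assumption, since the set of identities occupying $\tilde{A}$ can be swapped for the top $|\tilde{A}|$ bidders by exchangeability.

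The main obstacle I anticipate is truthfulness. Since the allocation and payment depend on each agent's rank in the reported profile, deviations that shift an agent's rank could a priori be beneficial. The argument I would pursue follows the template used for the classical FGHK02 profit extractor and for Moulin-style cost-sharing: overreporting can only push an agent to a rank with weakly higher $\tilde{p}_i$, while underreporting either loses the allocation or pushes the profile out of the $\vals\geq\evals$ regime, triggering the global rejection in which case no one gains. The all-or-nothing structure is crucial here because it removes any incentive to unilaterally break extraction, and tie-breaking must be handled carefully so that the posted-price interpretation remains consistent across neighboring reports. Once truthfulness is in hand, the revenue guarantee is immediate, and the behavior in the failure case $\vals\not\geq\evals$ is by construction.
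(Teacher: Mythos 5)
The paper offers no proof of this lemma; it is imported verbatim from \citet{HH12}, so what you have written is an attempted reconstruction of that cited construction. The reconstruction has a genuine gap at its foundation: you assume the envy-free optimal outcome on $\evals$ is a deterministic feasible set $\tilde{A}$ of ranks with individual prices $\tilde{p}_i$. In a general downward-closed (permutation) environment the envy-free optimum is a \emph{lottery}: each rank $i$ is served with some probability $\tilde{x}_i\in[0,1]$, and the vector $(\tilde{x}_i)_i$ is feasible only as a convex combination of feasible sets. The support of $\tilde{x}$ need not itself be feasible (e.g., in a rank-$k$ matroid the optimum may serve every agent with probability $k/n$, and serving all of them deterministically is infeasible), and your appeal to exchangeability does not repair this --- a fixed set of ranks is still not guaranteed feasible for the realized permutation. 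Moreover, if you insist on a $0/1$ allocation, envy-freeness among served agents forces a single uniform price, which in general cannot achieve $\EFO(\evals)$. So the object you build neither exists as described nor extracts the claimed revenue; the profit extractor of \citet{HH12} must carry the full per-rank allocation probabilities and their associated menu prices, not a set and a price vector.

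The incentive-compatibility step is also asserted rather than proved, and the posted-price framing does not deliver it: because the price offered to an agent depends on her rank, which depends on her own report, these are not take-it-or-leave-it offers, and ``overreporting moves you to a weakly higher price'' is not by itself an IC argument (nor is it obviously true for the envy-free payments). The argument that actually closes this is the standard one: show that, fixing the others' reports, each agent's (interim) allocation probability is monotone non-decreasing in her bid --- raising a bid weakly improves her rank, the envy-free allocation $\tilde{x}$ is non-increasing in rank, and raising a bid can only help the sorted dominance test $\vals\geq\evals$ whose failure triggers global rejection --- and then charge the payments dictated by the payment identity for that monotone allocation. That is the route taken in \citet{HH12}, and it is also where the all-or-nothing rejection earns its keep; your sketch gestures at these ingredients but does not assemble them.
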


An unbiased partitioning of agents into a market $M$ and sample $S$
would be very unlikely to satisfy pointwise dominance $\mals \geq
\sals$ as necessary for the profit extractor of Lemma~\ref{l:PErev}
on $\mals$ to give revenue at least $\EFO(\sals)$.  On the other hand, a simple
probability of ruin analysis shows that a biased partitioning satisfies
the requisite pointwise dominance property with constant probability.

\begin{LemmaNN}
\label{l:props} 
For partitioning of $N$ into $S$ (with probability~$\prop < 1/2$) and $M$
(otherwise) satisfies
$\prob{\mals\not\geq\sals}\leq\tfrac{\prop}{1-\prop}$ and $\prob{\mals\not\geq\sals \given 1 \in M} \leq \big(\tfrac{\prop}{1-\prop}\big)^2$.
\end{LemmaNN}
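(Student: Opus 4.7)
The plan is to reformulate the pointwise-dominance event as a random-walk hitting event and then apply the standard gambler's-ruin analysis. Sort the agents in non-increasing order of valuation, and for each $i$ set $X_i = +1$ if $i\in M$ and $X_i = -1$ if $i\in S$. These $X_i$ are i.i.d.\ with $\prob{X_i = +1} = 1-\prop$ and $\prob{X_i = -1} = \prop$, and the partial sum $T_k = \sum_{i=1}^k X_i$ equals the excess of market agents over sample agents among the top $k$ values. The first observation I would verify is that $T_k \geq 0$ for every $k\leq n$ implies $\mals \geq \sals$: if the prefix count is never negative, then for every $j$ the position of the $j$-th $M$-agent in the sorted order is no later than that of the $j$-th $S$-agent, so $\mali[j] \geq \sali[j]$. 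It therefore suffices to bound $\prob{T_k < 0 \text{ for some } k}$.

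Given this reduction, the first bound is exactly the gambler's-ruin probability that a biased walk starting at $0$, with step $+1$ of probability $1-\prop > 1/2$ and step $-1$ of probability $\prop$, ever visits $-1$. The textbook argument---letting $\alpha$ denote this probability and conditioning on the first step gives $\alpha = \prop + (1-\prop)\alpha^2$, whose relevant root is $\alpha = \prop/(1-\prop)$---yields the first inequality. For the second inequality, conditioning on $1 \in M$ fixes $X_1 = +1$, so after one step the walk sits at $+1$ with the remaining $X_i$ still i.i.d.; by the strong Markov property a descent from $+1$ to $-1$ factors as two independent one-level descents, giving hitting probability $\alpha^2 = \big(\prop/(1-\prop)\big)^2$.

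The main obstacle I anticipate is purely the bookkeeping in the first step: making the ``prefix-count $\Rightarrow$ pointwise dominance'' implication precise, ideally with an explicit tie-breaking convention on equal values so that indexing the sorted $M$- and $S$-positions is unambiguous (ties can only weaken the event $\{T_k<0\text{ for some }k\}$ relative to $\{\mals\not\geq\sals\}$, so the bound is not affected). Beyond that, the finiteness of the walk (length $n$ rather than infinite) only lowers the hitting probabilities, so the classical infinite-walk values serve as valid upper bounds with no further correction.
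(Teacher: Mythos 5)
Your proof is correct and follows essentially the same route as the paper: reduce pointwise dominance to the prefix-count random walk never going negative, then apply the biased gambler's-ruin recurrence $\alpha=\prop+(1-\prop)\alpha^2$ for the first bound and the two-level descent $\alpha^2$ after conditioning on $1\in M$ for the second. Your explicit justification of the ``prefix-count $\Rightarrow$ pointwise dominance'' step is a welcome addition the paper only states implicitly.
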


\begin{proof}
  Consider the following infinite random walk on a straight line:
  starting from position $0$, with probability $\prop$, move backward
  one step; otherwise, move forward one step.  The position of this
  random walk describes precisely the difference between the number of
  agents in $\market$ and $\sample$, where positive value means
  $\market$ has more agents than $\sample$.  The event
  $\mals\not\geq\sals$ happens when there exists a time that $\market$
  has less agents than $\sample$.  Let $\pruin$ be the probability
  that the random walk eventually takes one step backward from the
  initial position, we have $\pruin=\prop+(1-\prop)\pruin^2$.  The
  first component is the probability of taking one step backward in
  the first step, and the second component is the probability of the
  first step being a forward step, then eventually take two steps
  backward.  Solving this equation for $\pruin \in (0,1)$ gives
  $\pruin=\prop/(1-\prop)$.  When we condition on $1 \in M$, our
  initial position is 1 not 0 and the probability of ruin is
  $\pruin^2$.  If we stop the random walk after finite number $n$ of
  steps, it only improves the probability of ruin.
\end{proof}

The random sampling profit extraction auction is formally given below
with a few modifications for improved performance.

\begin{DefinitionNN}[$\BSPE{\prop}$]
\label{mech:bspe}
The \emph{biased sampling profit extraction} auction parameterized by
$\prop<0.5$ works as follow.
\begin{enumerate}

 \item\label{mech:bspe-bs} Randomly assign each of the agents to one
   of three groups $A$, $B$, and $C$ independently with probabilities
   $p$, $p$, and $1-2p$, respectively.

 \item\label{mech:bspe-v1} Assume without loss of generality that of
   the highest valued agent in $A$ has value at least that of the
   highest valued agent in $B$.  Define the market $M = A \cup C$ and
   sample $S = B$.  (If this highest valued agent in $A$ wins in
   Step~\ref{mech:bspe-pe} and the second highest valued agent in $A
   \cup B$ is in $B$ increase her payment to this second highest
   value.)

 \item\label{mech:bspe-pad} Pad the valuation vectors of $\market$ and
   $\sample$ with $0$'s so that they are equal in length.  Let the
   padded vectors be $\mals$ and $\sals$ respectively.
 \item\label{mech:bspe-pe} Run the profit extractor parameterized by
   $\sals$ on the market $\market$.
 \item\label{mech:bspe-vcg} If all agents are rejected by the profit
   extractor and it is feasible to serve agent 1 (the highest valued
   agent over all), serve her and charge her $\vali[2]$.
\end{enumerate}
\end{DefinitionNN}

\begin{LemmaNN}[Incentive Compatibility]\label{l:IC}
  For all probabilities $\prop$, $\BSPE{\prop}$ is incentive compatible.
\end{LemmaNN}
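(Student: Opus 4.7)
The plan is to condition on the random partition of agents into $A$, $B$, $C$ in step~\ref{mech:bspe-bs}; since this partition is independent of valuations, IC pointwise over partitions suffices. As building blocks I take the IC of the profit extractor of Lemma~\ref{l:PErev} given a fixed market and target profile, and the observation that step~\ref{mech:bspe-vcg} is a Vickrey rule on the globally top-valued agent. The only nontrivial IC concern is that the step-\ref{mech:bspe-v1} designation of which size-$p$ group is market versus sample depends on the tops of the two groups and hence can be influenced by the bid of an agent in one of them; the parenthetical payment adjustment in step~\ref{mech:bspe-v1} is designed precisely to neutralize this.

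For an agent $i \in C$, her bid does not enter the step-\ref{mech:bspe-v1} comparison, so the market $M = A \cup C$ and target $\sals$ are functions of other agents' bids alone; IC then follows from the IC of the profit extractor together with step~\ref{mech:bspe-vcg}. For an agent $i$ in a size-$p$ group $G$ with the other size-$p$ group $G'$, let $g = \max_{j \in G \setminus \{i\}} v_j$ and $g' = \max G'$. If $g \geq g'$, the top of $G$ is at least $g \geq g'$ regardless of $b_i$, so $G$ is always designated the market; moreover the second-highest in $A \cup B$ lies in $A = G$, so the special rule never applies to $i$ and IC reduces again to the IC of the profit extractor and step~\ref{mech:bspe-vcg}. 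If $g < g'$, then $G$ is designated market iff $b_i \geq g'$, in which case $i$ is top of $A = G$ and the second-highest in $A \cup B$ is $g' \in B$, so the special rule raises $i$'s payment upon winning to at least $g'$. Hence when $v_i < g'$, any winning bid yields negative utility, so truth-telling (placing her in the sample at utility $0$) is optimal; when $v_i \geq g'$, truth-telling makes her top of the market winning at $\max(t_i, g') \leq v_i$, where $t_i$ is her profit extractor threshold (independent of $b_i$), and no deviation improves her utility: overbidding leaves the payment unchanged, while underbidding below $g'$ puts her in the sample at utility $0 \leq v_i - \max(t_i, g')$.

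The main obstacle I anticipate is handling the interaction between step~\ref{mech:bspe-vcg} and the profit extractor for agent~$1$: in principle she might underbid so as to cause the profit extractor to reject every agent, triggering step~\ref{mech:bspe-vcg} at the Vickrey price $\vali[2]$ rather than her profit extractor threshold. Since $\vali[2]$ is agent~$1$'s Vickrey critical value for the single-item backup allocation, no bid can secure her a price strictly below this value, and a careful case-check of the relationship between the profit extractor's threshold $t_1$ and $\vali[2]$ in the regimes where the two rules compete completes the verification.
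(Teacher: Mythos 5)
Your overall strategy---condition on the realized partition, invoke the incentive compatibility of the profit extractor for fixed $M$, $S$, and $\sals$, and show that the parenthetical payment adjustment in Step~\ref{mech:bspe-v1} neutralizes any incentive to manipulate which of the two size-$\prop$ groups becomes the market---is the same as the paper's, and your case analysis of that manipulation ($i\in C$; $g\geq g'$; $g<g'$ split on whether $v_i\geq g'$) is correct and considerably more explicit than the two-sentence version the paper gives. (Minor gloss: in the $v_i\geq g'$ branch you implicitly assume $t_i\leq v_i$; you should add that if $t_i>v_i$ she loses truthfully at utility $0$ and any winning deviation pays at least $t_i>v_i$.)

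The genuine gap is exactly the point you flag as ``the main obstacle'' and then do not resolve: whether the highest bidder can profitably toggle between the profit-extractor outcome and the Step~\ref{mech:bspe-vcg} backup. Observing that $\vali[2]$ is her Vickrey critical value \emph{within} the backup regime does not address the dangerous deviation, which is to \emph{enter} that regime by underbidding so that $\mals\not\geq\sals$, all agents are rejected, and she is then served at price $\vali[2]$ instead of at a possibly higher profit-extractor threshold. The missing observation---which is the entire content of the paper's final sentence---is that she cannot do this while remaining the highest bidder: agent $1$ is always in the market, and for any bid $b_1\geq\vali[2]$ her bid occupies the first coordinate of the sorted market vector while the sample's first coordinate is at most $\vali[2]\leq b_1$; every other coordinate of $\mals$ and $\sals$ is independent of $b_1$, so the outcome of the dominance test is the same for all $b_1\geq\vali[2]$. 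Breaking dominance therefore requires bidding below $\vali[2]$, at which point Step~\ref{mech:bspe-vcg} serves the new top bidder rather than her and she gets utility $0$. Without this argument, the ``careful case-check'' you defer to is precisely the unproven part of the lemma.
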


\newcommand{\mech}[1][]{\mathcal{M}^{#1}}
\newcommand{\paymenti}[1][]{\tau_i^{#1}}

\begin{proof}
 Fix the partitioning of $A$, $B$, and $C$.  No agent in $M$ can
 change the definition of sets $M$ and $S$ without losing (thus
 obtaining zero utility).  No agent in $S$ can change the definition
 of sets $M$ and $S$ without obtaining a payment of at least her value
 (from the parenthetical in Step~\ref{mech:bspe-v1}, thus obtaining
 non-positive utility). Therefore no agent wants to manpulate the
 definition of $M$ and $S$.  For given $M$ and $S$ this mechanism is
 the profit extraction mechanism which is incentive compatible for
 fixed $M$ and $S$.  Only the highest valued agent would want to win
 in Step~\ref{mech:bspe-vcg}; furthermore, she cannot cause dominance
 to fail without lowering her bid (and forfeiting her status as the
 highest bidder).
\end{proof}


\newcommand{\safunc}{f}
\newcommand{\saSet}{N}
\newcommand{\saSubset}{M}
\newcommand{\saProp}{p}
\begin{LemmaNN}
\label{l:randomselection}
The envy-free benchmark $\EFO(\vals)$ for a random sample $S$ of $N$ with each element selected independently with probability $\prop$ satisfies $\expect{\EFO(\sals)} \geq \prop \EFO(\vals)$.
\end{LemmaNN}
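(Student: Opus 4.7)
The plan is to exploit the structural form of the envy-free benchmark in downward-closed environments, namely that it is achieved by a uniform-price auction on some prefix of the sorted-by-value agents. Specifically, I will use that there exists an integer $k^*$ such that the top $k^*$ agents $T = \{1,\ldots,k^*\}$ (in sorted order) form a feasible set and $\EFO(\vals) = k^* \cdot \vali[k^*]$. This is the defining property of an envy-free outcome: every served agent is charged the same price (the lowest value among the served), so the optimum is a uniform-price sale to some prefix of the values.

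Next, I would condition on the random sample $S$ and restrict attention to $S' = S \cap T$. Since the environment is downward-closed and $T$ is feasible in $N$, the subset $S'$ is also feasible. Every agent in $S'$ has value at least $\vali[k^*]$, so pricing uniformly at $\vali[k^*]$ and serving all of $S'$ yields an envy-free outcome on $S$ with revenue $|S'| \cdot \vali[k^*]$. Monotonicity of $\EFO$ in this sense therefore gives
\begin{equation*}
\EFO(\sals) \;\geq\; |S \cap T| \cdot \vali[k^*].
\end{equation*}

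Finally, I would take expectations over the independent inclusion of each agent in $S$. Since $|T| = k^*$ and each element of $T$ lies in $S$ independently with probability $\prop$, linearity of expectation gives $\expect{|S \cap T|} = \prop k^*$, and hence
\begin{equation*}
\expect{\EFO(\sals)} \;\geq\; \prop \cdot k^* \cdot \vali[k^*] \;=\; \prop \cdot \EFO(\vals).
\end{equation*}

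The main thing to get right is the first step: justifying that $\EFO(\vals)$ is attained by a uniform-price sale to a feasible prefix $\{1,\ldots,k^*\}$. Once that structural fact is in hand, the rest of the argument is a one-line expectation computation plus a downward-closure appeal for the feasibility of $S \cap T$. Everything else (monotonicity of $\EFO$ under adding feasible agents at the uniform price, linearity of expectation) is routine.
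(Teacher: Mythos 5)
There is a genuine gap in your first step. The claim that $\EFO(\vals)$ is attained by a uniform-price sale to a feasible prefix --- i.e., that $\EFO(\vals)=k^*\cdot\vali[k^*]$ for some $k^*$ with $\{1,\ldots,k^*\}$ feasible --- is the right picture for digital goods but is false for general downward-closed permutation environments, which is the setting of this lemma. Here the envy-free optimum is in general a \emph{lottery}: a monotone, fractional allocation $x_1\geq x_2\geq\cdots$ whose revenue decomposes as $\sum_k (x_k-x_{k+1})\,k\,\vali[k]$, a convex combination of prefix revenues in which no single prefix need be feasible on its own. Concretely, take $n=3$, feasible sets consisting of all singletons together with one fixed pair, and values $(1,1,0)$. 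The best feasible uniform-price prefix sale serves one agent at price $1$ for revenue $1$; but the envy-free optimum serves the top-ranked agent with probability $1$ at price $1$ and the second-ranked agent with probability $1/3$ (the chance the top two land on the feasible pair under the random permutation) at price $1/3$, for revenue $4/3$. Your inequality $\EFO(\sals)\geq|S\cap T|\cdot\vali[k^*]$ is therefore anchored to the wrong value of $\EFO(\vals)$, and the final bound does not follow. (Relatedly, ``every served agent pays the same price'' is not the defining property of envy-freeness once randomized outcomes are allowed.)

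The repair is to drop the structural claim entirely, which is what the paper does: take the envy-free optimal outcome on $\vals$, whatever its form, and restrict it to the agents of $S$. Downward closure keeps the restricted outcome feasible, and deleting agents cannot create envy among those who remain, so $\EFO(\sals)$ is at least the revenue contributed by the agents of $S$ in the original optimum. Each agent's contribution survives with probability $\prop$, so linearity of expectation gives $\expect{\EFO(\sals)}\geq\prop\,\EFO(\vals)$. Your second and third steps are exactly this argument specialized to a prefix, so the fix is local to step one: apply the restriction-plus-linearity argument to the actual optimal outcome rather than to a hypothesized prefix.
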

\begin{proof}
Consider the envy-free optimal outcome for $\vals$.  Clearly if we
restrict attention only to agents in $S$ there is still no envy.
Therefore, $\EFO(\sals) \geq \EFO_S(\vals)$ where $\EFO_S(\vals)$ is
short-hand notation for the contribution from agents in $S$ to the
envy-free optimal revenue on $\vals$.  Of course, $\expect{\EFO_S(\vals)} = \prop \EFO(\vals)$.
\end{proof}

\begin{LemmaNN}
For any downward-closed permutation environment and any probability $\prop<0.5$,\footnote{The first part of this lemma is non-trivial only for $\prop < 0.38$.}
 \begin{enumerate}
  \item\label{l:approxvalsmi1} $\BSPE{\prop}$ approximates 
   $\EFO(\valsmi[1])$ to within a factor of $\prop-\big(\tfrac{\prop}
   {1-\prop}\big)^2$ where $\valsmi[1] = (\vali[2],\vali[3],\ldots,\vali[n])$. 
  \item\label{l:approxvali2} $\BSPE{\prop}$ approximates $\EFO(\vali[2])$ to within a factor of $\prop+(1-\prop)\prop^3$ when there are $n \geq 5$ agents.
\end{enumerate}
\end{LemmaNN}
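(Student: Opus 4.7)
The plan is to lower bound the expected revenue by tracking two revenue sources: the profit extractor in Step~\ref{mech:bspe-pe} (which delivers $\EFO(\sals)$ whenever pointwise dominance $\mals\geq\sals$ holds, by Lemma~\ref{l:PErev}) and the fallback Step~\ref{mech:bspe-vcg} (which collects $\vali[2]$ whenever the profit extractor rejects all agents). The structural role of the WLOG swap of Step~\ref{mech:bspe-v1} is to guarantee that agent~1 is always in $M$, so $S\subseteq\{2,\ldots,n\}$ and hence $\EFO(\sals)\leq\EFO(\valsmi[1])$.

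For Part~\ref{l:approxvalsmi1}, I would condition on the event $E=\{1\in A\cup B\}$. A short exchangeability calculation shows that conditional on $E$, each agent $i\geq 2$ lies in the post-swap sample $S$ independently with marginal probability $\prop$: agent $i$ lies in $A\cup B$ with probability $2\prop$, and conditional on this and on agent~1's original label in $\{A,B\}$, agent $i$ lies in $B_{\text{post}}$ with probability $1/2$. Thus the conditional distribution on $\{2,\ldots,n\}$ exactly matches the biased Bernoulli-$\prop$ partition of Lemma~\ref{l:props} with agent~1 in $M$. Lemma~\ref{l:props} gives $\Pr[\mals\not\geq\sals\mid E]\leq(\prop/(1-\prop))^2$ and Lemma~\ref{l:randomselection} gives $\expect{\EFO(\sals)\mid E}\geq\prop\EFO(\valsmi[1])$, so
\[
\expect{\BSPE{\prop}(\vals)\mid E}\geq\expect{\EFO(\sals)\mid E}-\EFO(\valsmi[1])\Pr[\mals\not\geq\sals\mid E]\geq\bigl(\prop-(\prop/(1-\prop))^2\bigr)\EFO(\valsmi[1]).
\]
Extending this conditional bound to the unconditional statement requires accounting for the complementary event $\{1\in C\}$, on which the mechanism effectively recurses on $\{2,\ldots,n\}$ and Step~\ref{mech:bspe-vcg} contributes $\vali[2]$ whenever the inner profit extractor rejects everyone.

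For Part~\ref{l:approxvali2}, the revenue of $\vali[2]$ accrues in three (possibly overlapping) scenarios: (a) agent~2 lies in $S$ and dominance holds, so $\EFO(\sals)\geq\vali[2]$; (b) the profit extractor rejects everyone and Step~\ref{mech:bspe-vcg} extracts $\vali[2]$ from agent~1; and (c) the IC modification of Step~\ref{mech:bspe-v1} raises the payment of the highest agent in $A$ to the second highest value in $A\cup B$, which equals $\vali[2]$ when both agents~1 and~2 lie in $A\cup B$, yielding $\vali[2]$ when agent~1 wins the profit extractor. I would enumerate cases on the placement of agents~1 and~2 among the bins $A$, $B$, $C$, sum the probabilities producing revenue at least $\vali[2]$, and verify that the total reaches $\prop+(1-\prop)\prop^3$. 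The hypothesis $n\geq 5$ enters to control the small-probability tail events (such as the sample $S$ being empty) which are responsible for the $(1-\prop)\prop^3$ second-order correction.

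The main obstacle in both parts is handling the dependencies the WLOG swap introduces. The unconditional post-swap partition does \emph{not} have independent Bernoulli-$\prop$ placements: the marginal probability that agent $i\geq 2$ lies in $S$ is $\prop(1-(1-2\prop)^{i-1})$, strictly less than $\prop$ for finite $i$ (for instance $2\prop^2$ when $i=2$). The key trick is to condition on agent~1's placement (specifically on $\{1\in A\cup B\}$) to recover the clean i.i.d.\ Bernoulli-$\prop$ structure required by Lemmas~\ref{l:props} and~\ref{l:randomselection}, and then combine the conditional bound with Step~\ref{mech:bspe-vcg} contributions on the complementary event $\{1\in C\}$ to obtain the unconditional statement.
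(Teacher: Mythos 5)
You have correctly isolated the real subtlety here --- after the swap in Step~\ref{mech:bspe-v1} the membership indicators for $S$ are no longer independent Bernoulli-$\prop$, and your marginal computation $\prob{i\in S}=\prop(1-(1-2\prop)^{i-1})$ is right --- but the repair you propose does not close, so there is a genuine gap. Conditioning on $E=\{1\in A\cup B\}$ does restore the i.i.d.\ Bernoulli-$\prop$ structure on agents $2,\dots,n$ and yields the stated bound conditional on $E$; however $\prob{E}=2\prop$, so this alone gives only $2\prop\bigl(\prop-(\prop/(1-\prop))^2\bigr)\EFO(\valsmi[1])$, and to reach the unconditional factor you would need a bound of equal strength on the complementary event $\{1\in C\}$. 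That bound is not available by the same reasoning: conditional on $1\in C$, agent~2 can only enter $A\cup B$ as its highest-valued member and is therefore always swapped into $M$, so $\prob{2\in S\given 1\in C}=0$ and $\expect{\EFO(\sals)\given 1\in C}$ can be an arbitrarily small fraction of $\EFO(\valsmi[1])$ (take $\vali[2]=1$ and $\vali[3]=\cdots=\vali[n]=\epsilon$). Step~\ref{mech:bspe-vcg} cannot make up the deficit there, since it fires only when the profit extractor rejects everyone, i.e., only when dominance fails, an event of probability at most $(\prop/(1-\prop))^2<\prop$. The paper's own proof takes a different and much shorter route: it applies Lemma~\ref{l:randomselection} and Lemma~\ref{l:props} directly to the post-swap partition, using only that the swap forces the top agent of $A\cup B$ into $M$ (which can only improve the dominance probability), and never conditions on where agent~1 landed.

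For part~\ref{l:approxvali2} you give only an outline, and the one substantive assertion --- that $n\geq 5$ serves to ``control small-probability tail events such as $S$ being empty'' --- misidentifies the hypothesis's role. In the paper's argument $n\geq 5$ is what guarantees that, when agent~2 lands in the market, the failure event $\mals\not\geq\sals$ has probability at least $\prop^3$: one needs three agents besides agents~1 and~2 to all fall into the sample so that the walk of Lemma~\ref{l:props} can descend from $+2$ to $-1$. This failure event is a positive revenue contribution, not a nuisance to be controlled: it triggers Step~\ref{mech:bspe-vcg} and collects $\vali[2]$ from agent~1, which is exactly where the $(1-\prop)\prop^3$ term comes from. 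The paper's part~\ref{l:approxvali2} is just the two-case split on agent~2's location (in $S$: revenue at least $\EFO(\vali[2])$ whether dominance holds or not, since Step~\ref{mech:bspe-vcg} covers the failure branch; in $M$: Step~\ref{mech:bspe-vcg} alone contributes with probability at least $\prop^3$), with no enumeration over the bins $A$, $B$, $C$ and no reliance on the payment bump in Step~\ref{mech:bspe-v1}.
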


\begin{proof}
 To show part~\ref{l:approxvalsmi1} of the lemma, we will focus on the revenue
 obtainable via the profit extraction step.  Lemma~\ref{l:PErev} says that
 we would obtain at least $\EFO(\sals)$ when $\mals\geq\sals$.  Thus the
 expected revenue is at least:
 \begin{align*}
  \expect{\BSPE{\prop}(\vals)}
  &\geq  \expect{\EFO(\sals)\mid\mals\geq\sals}\cdot\prob{\mals\geq\sals}\\
    & =  \expect{\EFO(\sals)} - \expect{\EFO(\sals)\mid\mals\not\geq\sals}\cdot\prob{\mals\not\geq\sals}\\
  &\geq  \prop\EFO(\valsmi[1])-\EFO(\valsmi[1])\cdot\prob{\mals\not\geq\sals}\\
    &=  \big[\prop-\big(\tfrac{\prop}{1-\prop}\big)^2\big]\cdot\EFO(\valsmi[1]).
 \end{align*}
 The second inequality warrants some explanation: the first term
 follows from applying Lemma \ref{l:randomselection} to $\valsmi[1]$,
 the second term follows from monotonicity of $\EFO$.

 
 To show part~\ref{l:approxvali2} of the lemma, we analize the event
 that player 2 is in the sample and the event that the market
 pointwise dominates the sample.  With probability $\prop$ agent 2 is
 in the sample and $\EFO(\sals) \geq \EFO(\vali[2])$.  If the market
 pointwise dominates the sample then the mechanism obtains this
 revenue; otherwise, the revenue from agent 1 via
 Step~\ref{mech:bspe-vcg} is at least $\EFO(\vali[2])$.  With
 probability $(1-\prop)$ agent 2 is in the market and the probability
 of $\mals \not\geq \sals$ (implying that the profit extractor fails)
 is at least $\prop^3$ by stepping backward three steps in a row
 (possible when there are more than $5$ agents); in this case again
 the revenue from agent 1 via Step~\ref{mech:bspe-vcg} is at least
 $\EFO(\vali[2])$.  To conclude, the revenue of the mechanism it at
 least:
 \begin{align*}
  \expect{\BSPE{\prop}(\vals)}
  &\geq  \big[\prop+(1-\prop)\prop^3\big] \cdot \EFO(\vali[2]). \qedhere
 \end{align*}

\end{proof}

\begin{TheoremNN}
 For any downward-closed permutation environment with probability
 $\prop<0.5$ and $n \geq 5$ agents,\footnote{For $n\leq 4$ agents the
   1-unit Vickrey auction is a 4-approximation to $\EFO(\vals \super
   2)$.} $\BSPE{\prop}$ approximates $\EFO(\vals \super 2)$ within a
 factor of $\tfrac{r_1+r_2}{r_1r_2}$ where $r_1=\prop-
 \big(\tfrac{\prop}{1-\prop}\big)^2$ and
 $r_2=\prop+(1-\prop)\prop^3$. This factor is minimized at $11$ when
 $\prop=0.26$.
\end{TheoremNN}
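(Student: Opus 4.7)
The plan is to combine the two approximation results from the previous lemma via the subadditive decomposition $\EFO(\vals\super{2}) \leq \EFO(\valsmi[1]) + \vali[2]$. First I would justify this decomposition: the profile $\vals\super{2} = (\vali[2], \vali[2], \vali[3], \ldots, \vali[n])$ is exactly the profile $\valsmi[1] = (\vali[2], \vali[3], \ldots, \vali[n])$ augmented by one extra agent of value $\vali[2]$. Given any envy-free optimal outcome on $\vals\super{2}$, deleting the extra agent's outcome leaves the other agents' outcomes unchanged and so cannot introduce new envy; the revenue from the $n-1$ remaining agents is therefore at most $\EFO(\valsmi[1])$, while the deleted agent contributes at most $\vali[2]$ by individual rationality.

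Second, I would plug in the two lower bounds from the previous lemma, $\expect{\BSPE{\prop}(\vals)} \geq r_1 \, \EFO(\valsmi[1])$ and $\expect{\BSPE{\prop}(\vals)} \geq r_2 \, \vali[2]$ (using $\EFO(\vali[2]) = \vali[2]$ on a single-agent profile in a downward-closed setting), and take the convex combination with weights $\tfrac{r_2}{r_1+r_2}$ and $\tfrac{r_1}{r_1+r_2}$. These weights are chosen so that both right-hand sides pick up the common coefficient $\tfrac{r_1 r_2}{r_1+r_2}$, yielding
\[
  \expect{\BSPE{\prop}(\vals)} \;\geq\; \frac{r_1 r_2}{r_1+r_2}\bigl(\EFO(\valsmi[1]) + \vali[2]\bigr) \;\geq\; \frac{r_1 r_2}{r_1+r_2}\, \EFO(\vals\super{2}),
\]
which establishes the claimed approximation factor of $\tfrac{r_1+r_2}{r_1 r_2}$.

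Finally, I would verify that this factor is minimized at $11$ at $\prop = 0.26$. This is a routine single-variable calculus exercise: substitute the closed forms $r_1(\prop) = \prop - \bigl(\tfrac{\prop}{1-\prop}\bigr)^2$ and $r_2(\prop) = \prop + (1-\prop)\prop^3$ into $(r_1+r_2)/(r_1 r_2)$, differentiate in $\prop$, and solve for the critical point numerically. I expect the main subtlety of the argument to be in the first step, namely carefully justifying the decomposition under the precise envy-free (and implicitly individually rational) benchmark used in downward-closed permutation environments; the weighted averaging and the optimization are then both standard.
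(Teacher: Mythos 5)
Your proposal is correct and follows essentially the same route as the paper: decompose $\EFO(\vals \super 2)\leq\EFO(\valsmi[1])+\EFO(\vali[2])$ and combine the two lower bounds $r_1\EFO(\valsmi[1])$ and $r_2\EFO(\vali[2])$ by the convex combination with weights $r_2/(r_1+r_2)$ and $r_1/(r_1+r_2)$. The only difference is that the paper simply cites the subadditivity of $\EFO$ from \citet{HY11} where you prove the needed inequality directly; your argument for it is sound.
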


\begin{proof}
 $\EFO(\vali[2])+\EFO(\valsmi[1])\geq\EFO(\vals \super 2)$ due to
 subadditivity of $\EFO$ function as shown by \citet{HY11}. Combining with
 the above lemma, we have the desired ratio.
\end{proof}

\bibliographystyle{apalike}
\bibliography{bspe}

\end{document}